\newtheorem{theorem}{Theorem}
\newtheorem{lemma}{Lemma}
\newcommand{\eq}[1]{\hyperref[eq:#1]{(\ref*{eq:#1})}}
\renewcommand{\sec}[1]{\hyperref[sec:#1]{Section~\ref*{sec:#1}}}
\newcommand{\app}[1]{\hyperref[app:#1]{Appendix~\ref*{app:#1}}}
\newcommand{\fig}[1]{\hyperref[fig:#1]{Figure~\ref*{fig:#1}}}
\newcommand{\thm}[1]{\hyperref[thm:#1]{Theorem~\ref*{thm:#1}}}
\newcommand{\lem}[1]{\hyperref[lem:#1]{Lemma~\ref*{lem:#1}}}
\newcommand{\tab}[1]{\hyperref[tab:#1]{Table~\ref*{tab:#1}}}
\newcommand{\cor}[1]{\hyperref[cor:#1]{Corollary~\ref*{cor:#1}}}
\newcommand{\alg}[1]{\hyperref[alg:#1]{Algorithm~\ref*{alg:#1}}}
\newcommand{\defn}[1]{\hyperref[def:#1]{Definition~\ref*{def:#1}}}
\newcommand{\VS}{V\!S}
\newenvironment{proofof}[1]{\begin{trivlist}\item[]{\flushleft\it
Proof of~#1.}}
{\qed\end{trivlist}}
\begin{document}
\title{Quantum Perceptron Models}
\author{Nathan Wiebe, Ashish Kapoor, Krysta M. Svore}
\address{Microsoft Research, One Microsoft Way, Redmond WA 98052}
\begin{abstract}
We demonstrate how quantum computation can provide non-trivial improvements in the computational and statistical complexity of the
perceptron model. We develop two quantum algorithms for perceptron learning. The first algorithm exploits quantum information processing to determine a separating hyperplane using a number of steps sublinear in the number of data points $N$, namely $O(\sqrt{N})$. The second algorithm illustrates how the classical mistake bound of $O(\frac{1}{\gamma^2})$ can be further improved to $O(\frac{1}{\sqrt{\gamma}})$ through quantum means, where $\gamma$ denotes the margin. Such improvements are achieved through the application of quantum amplitude amplification to the version space interpretation of the perceptron model.
\end{abstract}
\maketitle
\section{Introduction}
\label{sec:intro}
Quantum computation is an emerging technology that utilizes quantum effects to achieve significant, and in some cases exponential, speed-ups of algorithms over their classical counterparts. The growing importance of machine learning has in recent years led to a host of studies that investigate the promise of quantum computers for machine learning~\cite{aimeur2006machine, lloyd2013quantum,wiebe2015quantum, lloyd2014quantum, rebentrost2014quantum, WG15, WKS14,amin2016quantum}. 

While a number of important quantum speedups have been found, the majority of these speedups are due to replacing a classical subroutine with an equivalent albeit faster quantum algorithm.  The true potential of quantum algorithms may therefore remain  underexploited since quantum algorithms have been constrainted to follow the same methodology behind traditional machine learning methods~\cite{garnerone2012adiabatic,WKS14,amin2016quantum}.   Here we consider an alternate approach: we devise a new machine learning algorithm that is tailored to the speedups that quantum computers can provide.

We illustrate our approach by focusing on perceptron training~\cite{Ros58}. The perceptron is a fundamental building block for various machine learning models including neural networks and support vector machines~\cite{SV99}.  Unlike many other machine learning algorithms, tight bounds are known for the computational and statistical complexity of traditional perceptron training.  Consequently, we are able to rigorously show different performance improvements that stem from either using quantum computers to improve traditional perceptron training or from devising a new form of perceptron training that aligns with the capabilities of quantum computers.

We provide two quantum approaches to perceptron training. The first approach focuses on the computational aspect of the problem and the proposed method quadratically reduces the scaling of the complexity of training with respect to the number of training vectors. The second algorithm focuses on statistical efficiency. In particular, we use the mistake bounds for traditional perceptron training methods and ask if quantum computation lends any advantages. To this end, we propose an algorithm that quadratically improves the scaling of the training algorithm with respect to the margin between the classes in the training data.  The latter algorithm combines quantum amplitude estimation in the version space interpretation of the perceptron learning problem. Our approaches showcase the trade-offs that one can consider in developing quantum algorithms, and the ultimate advantages of performing learning tasks on a quantum computer.

The rest of the paper is organized as follows: we first cover the background on perceptrons, version space and Grover search. We then present our two quantum algorithms and provide analysis of their computational and statistical efficiency before concluding.

\section{Background}
\subsection{Perceptrons and Version Space}
\label{sec:version}
Given a set of $N$ separable training examples $\{\phi_1, .., \phi_N\}\in \mathbb{R}^D$ with corresponding labels $\{y_1, .., y_N\}$, $y_i \in \{+1, -1\}$, the goal of perceptron learning is to recover a hyperplane $w$ that perfectly classifies the training set~\cite{Ros58}. Formally, we want $w$ such that $y_i \cdot w^T\phi_i > 0$ for all $i$. There are various simple online algorithms that start with a random initialization of the hyperplane and make updates as they encounter more and more data~\cite{Ros58,LZH02,Gen02,SS05}; however, the rule that we consider for online perceptron training is, upon misclassifying a vector $(\phi,y)$, $w \gets w + y \phi$.

A remarkable feature of the perceptron model is that upper bounds exist for the number of updates that need to be made during this training procedure. In particular, if the training data is composed of unit vectors, $\phi_i \in \mathbb{R}^D$, that are separated by a margin of $\gamma$ then there are perceptron training algorithms that make at most $O(\frac{1}{\gamma^2})$ mistakes~\cite{Nov63}, independent of the dimension of the training vectors. Similar bounds also exist when the data is not separated~\cite{Fre99} and also for other generalizations of perceptron training~\cite{LZH02,Gen02,SS05}. Note that in the worst case, the algorithm will need to look at all points in the training set at least once, consequently the computation complexity will be $O(N)$.

Our goal is to explore if the quantum procedures can provide improvements both in terms of computational complexity (that is better than $O(N)$) and statistical efficiency (improve upon $O(\frac{1}{\gamma^2})$. Instead of solely applying quantum constructs to the feature space, we also consider the version space interpretation of perceptrons which leads to the improved scaling with $\gamma$.

\begin{figure}[t!]
\includegraphics[width=0.6\linewidth]{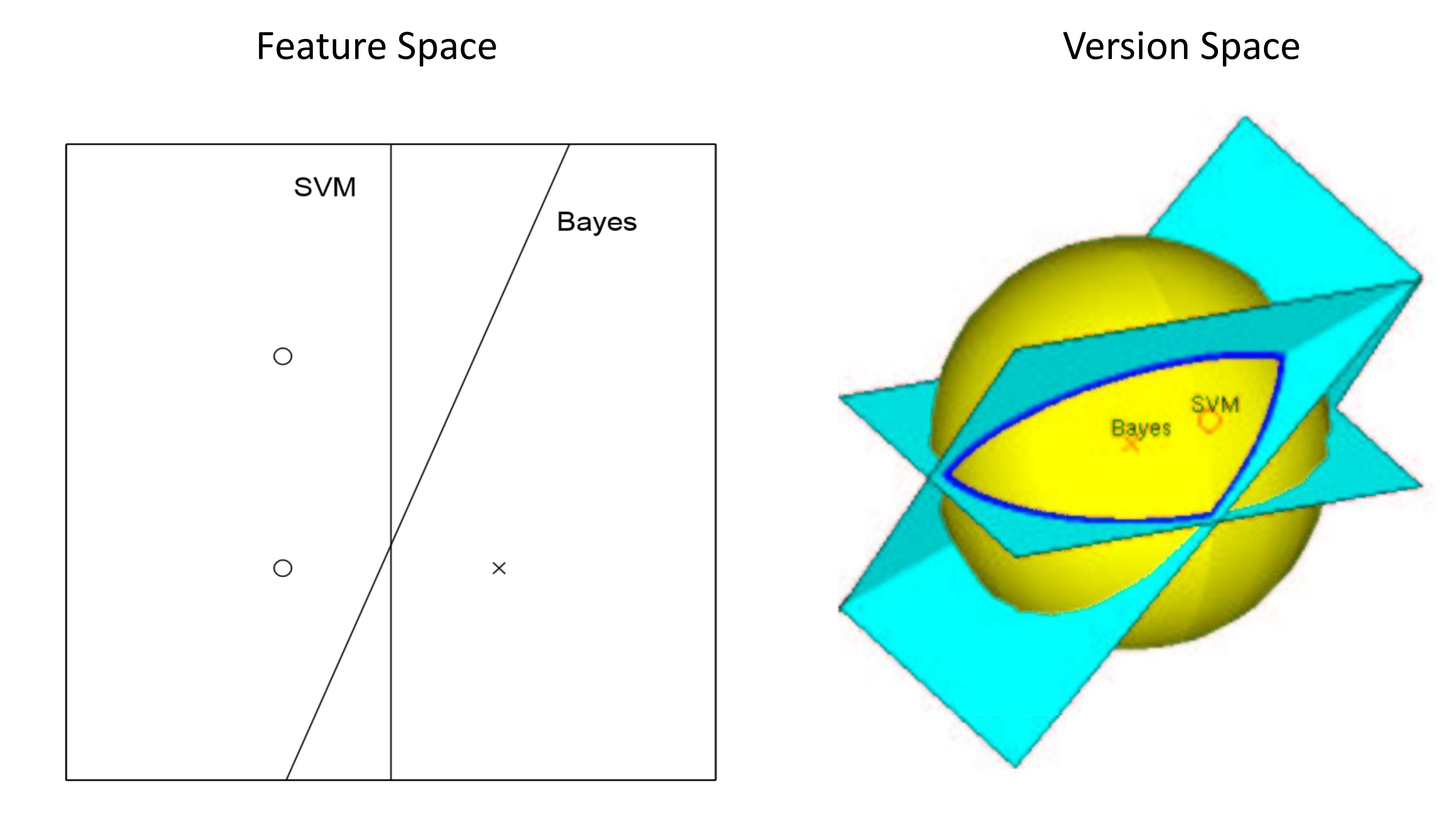}
\caption{Version space and feature space views of classification. This figure is from \citep{Min01}. \label{fig:version}}
\end{figure}

Formally, version space is defined as the set of all possible hyperplanes that perfectly separate the data: $\VS := \{w | y_i \cdot w^T\phi_i > 0 \mbox{ for all } i\}$. Given a training datum, the traditional representation is to depict data as points in the feature space and use hyperplanes to depict the classifiers. However, there exists a dual representation where the hyperplanes are depicted as points and the data points are represented as hyperplanes that induce constraints on the feasible set of classifiers. Figure~\ref{fig:version}, which is borrowed from \cite{Min01}, illustrates the version space interpretation of perceptrons. Given three labeled data points in a 2D space, the dual space illustrates the set of normalized hyperplanes as a yellow ball with unit radius. The third dimension corresponds to the weights that multiply the two dimensions of the input data and the bias term. The planes represent the constraints imposed by observing the labeled data as every labeled data renders one-half of the space infeasible. The version space is then the  intersection of all the half-spaces that are valid. Naturally, classifiers including SVMs~\cite{SV99} and Bayes point machines~\cite{HGC99} lie in the version space.

We note that there are quantum constructs such as Grover search and amplitude amplification which provide non-trivial speedups for the search task. This is the main reason why we resort to the version space interpretation. We can use this formalism to simply pose the problem of determining the separating hyperplane as a search problem in the dual space. For example given a set of candidates hyperplanes, our problem reduces to searching amongst the sample set for the classifier that will successfully classify the entire set. Therefore training the perceptron is equivalent to finding \emph{any} feasible point in the version space. We describe these quantum constructs in detail below.

%
%

\subsection{Grover's Search}
\label{sec:grover}
Both quantum approaches introduced in this work and their corresponding speed-ups stem from a quantum subroutine called Grover's search~\cite{Gro96,BBHT96}, which is a special case of a more general method referred to as amplitude amplification \cite{BHMT02}.
Rather than sampling from a probability distribution until a given marked element is found, the Grover search algorthm draws only one sample and then uses quantum operations to modify the distribution from which it sampled.
The probability distribution is rotated, or more accurately the quantum state that yields the distribution is rotated, into one whose probability is sharply concentrated on the marked element.
Once a sharply peaked distribution is identified, the marked item can be found using just one sample.
In general, if the probability of finding such an element is known to be $a$ then amplitude amplification requires $O(\sqrt{1/a})$ operations to find the marked item with certainty.

\begin{figure*}[t!]
\centering
\includegraphics[width=0.9\linewidth]{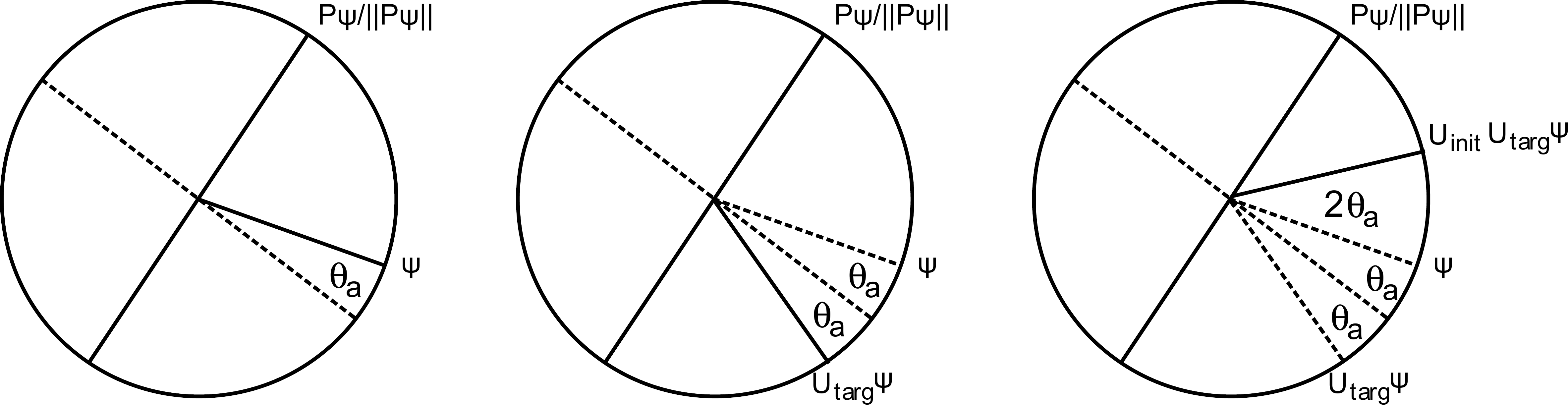}
\caption{A geometric description of the action of $U_{\rm grover}$ on an initial state vector $\psi$. }
\label{grover}
\end{figure*}

While Grover's search is a {\it quantum subroutine}, it can in fact be understood using only geometric arguments.
The only notions from quantum mechanics used are those of the quantum state vector and that of Born's rule (measurement).
A quantum state vector is a complex unit vector whose components have magnitudes that are equal
to the square--roots of the probabilities.  In particular, if $v$ is a quantum state vector and $p$ is the corresponding probability distribution then
\begin{equation}
p=v^\dagger \circ v,
\end{equation}
where the unit column vector $v$ is called the quatum state vector which sits in the vector space $\mathbb{C}^n$, $\circ$ is the Hadamard (pointwise) product and $\dagger$ is the complex conjugate transpose.
A quantum state can be measured such that if we have a quantum state vector $v$ and a basis vector $w$ then the probability of measuring
$v=w$ is $|\langle v, w\rangle|^2$, where $\langle \cdot,\cdot\rangle$ denotes the inner product.
One of the main differences between quantum and classical distributions is that the probability distribution resulting from measurement depends strongly on the basis in which the vector is measured. This basis dependence of measurement is the root of many of the differences between quantum and classical probability theory
and also gives rise to many celebrated results in the foundations of quantum mechanics such as Bell's theorem~\cite{Bel64}.

At first glance, introducing the quantum state vector $v$ may not seem to provide any advantages over working with $p$ for the purposes of sampling.
More careful consideration reveals that the fact that $v$ is complex valued allows transformations on $v$ to be performed that cannot be performed
on $p$. In particular, we can reflect the quantum state vector about any axis, whereas we cannot do the same to $p$ without violating its positivity.  Grover's
search, in fact, is a cunning way to perform a series of reflections on $v$ to bias $p$ towards the marked state we wish to find.
While such reflections may not make sense from a classical perspective, quantum computers can be used to realize them efficiently.

The key feature of a quantum computer is that it permits any unitary transformation to be performed on the unit vector $v$, within arbitrarily small approximation error.
We define the initial quantum state vector to be $\psi$ and define $P$ to be a projection matrix onto a set of configurations that
we want to find.  In particular, if we define $\nu_{\rm good}$ to be the set of all items that we want the quantum algorithm to find then
\begin{eqnarray}
P \psi = \left\{\begin{array}{ll} \psi, &\mbox{if }\psi\in \nu_{\rm good}\\ 0, &\mbox{otherwise} \end{array} \right.
\end{eqnarray}
Here being able to apply $P$ \emph{does not imply} that $\nu_{\rm good}$ is known.  Instead, it implies that a subroutine that
checks to see if $\psi\in \nu_{\rm good}$ exists.  The fact that $P$ is implemented by a linear transformation of the state vector also allows it to be simultaneously applied to exponentially many $v$ via $P\psi=\frac{1}{\|a\|}\sum_{j=1}^N a_j v = \frac{1}{\|a\|} \sum_{j=1}^N a_j P v$.  These two features allow a single application of $\openone - 2P$ to be efficiently applied, assuming membership in $v_{\rm good}$ can be efficiently tested, even though $\psi$ is a sum of exponentially many basis vectors.

In order to perform the search algorithm we need to implement two unitary operations:
\begin{equation}
U_{\rm init} = {2\psi\psi^\dagger-\openone}, U_{\rm targ} = \openone - 2P.
\end{equation}
The operators $U_{\rm init}$ and $U_{\rm targ}$ can be interpreted geometrically as reflections within a two--dimensional space spanned by the vectors $\psi$ and $P\psi$.
If we assume that $P\psi \ne 0$ and $P\psi \ne \psi$ then these two reflection operations can be used to rotate $\psi$ in the space ${\rm span}(\psi,P\psi)$.  Specifically this rotation is $U_{\rm grover}= U_{\rm init}U_{\rm targ}$.  Its action is illustrated in Figure \ref{grover}.
If the angle between the vector $\psi$ and $P\psi /\|P\psi\|$ is $\pi/2 -\theta_a$, where  $\theta_a := \sin^{-1}( |\langle \psi, P\psi /\|P \psi\|\rangle|)$.
It then follows from elementary geometry and the rule for computing the probability distribution from a quantum state (known as Born's rule) that after $j$ iterations
of Grover's algorithm the probability of measuring a desirable outcome is
\begin{equation}
p(v\in \nu_{\rm good}|j)=\sin^2((2j+1)\theta_a).\label{eq:theta_a}
\end{equation}
It is then easy to see that if $\theta_a \ll 1$ and a probability of success greater than $1/4$ is desired then $j\in O(1/\sqrt{\theta_a})$ suffices to find a marked outcome.
This is quadratically faster than is possible from statistical sampling which requires $O(1/\theta_a)$ samples on average.

As an example, if the initial success probability is $1/4$ then $\theta_a = \sin^{-1}(1/2)=\pi/6$.  Therefore if we take $j=1$ then $p(v\in \nu_{\rm good}|j)=1$.  As a result
a desirable outcome can be found after only $3$ quantum operations whereas $4$ samples from the initial distribution would be needed on average to find a marked outcome
if quantum methods were not used.

If on the other hand, the success probability were $1/2$ then $\theta_a = \pi/4$ and $\sin^2((2j+1)\pi/4) = 1/2$ for all $j$.  This problem
can be easily addressed by doing something that would not make any sense classically: purposefully lowering the success probability to $1/4$ by requiring a new event $w=(v,u)$ where we define $w$ to be a good state if the independent variables $v$ is good and $u\sim {\rm Bern}(1/2)$ is $0$.  The independence assumption means that the probability that both conditions are satisfied is $1/4$ and hence a good $v$ can be found with certainty by applying amplitude amplification on $w$.  More generally, if $\theta_a$ is known
then this trick can be applied to make $\theta_a\mapsto \pi/(2[2j+1])$ (for positive integer $j$) which makes the search procedure deterministic.

On the other hand, if $\theta_a$ is not known then it isn't clear how $j$ should be chosen to make the success probability greater than $1/4$.  Fortunately, methods are
known to deal with such issues~\cite{BBHT96,BHMT02}.  The simplest one exploits the fact that the average of $p$ over a range of $j=0,\ldots,M-1$ can be easily computed:
\begin{eqnarray}
p(v\in \nu_{\rm good}; M) &= \frac{1}{M}\sum_{j=0}^{M-1} \sin^2((2j+1) \theta_a)\nonumber\\
&= \frac{1}{2}\left(1 - \frac{\sin(4M\theta_a)}{2M\sin(2\theta_a)} \right).
\end{eqnarray}
If $M \ge M_0 := \frac{1}{\sin(2\theta_a)}$ then it is straight forward to see that
\begin{equation}
\frac{1}{2}\left(1 - \frac{\sin(4M\theta_a)}{2M\sin(2\theta_a)} \right) \ge \frac{1}{2}\left(1 - \frac{1}{2M\sin(2\theta_a)} \right)\ge \frac{1}{4}.\label{eq:plb}
\end{equation}
The average probability is then guaranteed to be at least $1/4$ if $j$ is chosen to be drawn uniformly from $\{0,\ldots,M-1\}$ if $M\ge M_0$.  If a lower bound on $\theta_a$ is known a good sample
can be drawn, then an appropriate value of $M$ can be computed.

If no lower bound on $\theta_a$ is known then a marked element can nonetheless be found with high probability by exponential searching.  Exponential searching involves, for step $i$ taking $M=c^i$ for some $c \in (1,2)$.  After a logarithmic number of applications of amplitude amplification it will attain $M\ge M_0$ with high probability.  After which the average success probability is known to be bounded below by $1/4$ and the algorithm will succeed with high probability in a constant number of attempts.  Thus the quadratic speedup holds even if the success probability is not known apriori.

\section{Online quantum perceptron}
Now that we have discussed Grover's search we  turn our attention to applying it to speed up online perceptron training.  In order to do so, we first need to define 
the quantum model that we wish to use as our quantum analogue of perceptron training.
While there are many ways of defining such a model but the following approach is perhaps the most direct.
Although the traditional feature space perceptron training algorithm is online~\cite{Nov63}, meaning that the training examples are provided one at a time to it in a streaming fashion, we deviate from this model slightly by instead requiring that the algorithm be fed training examples that are, in effect, sampled uniformly from the training set.  This is a slightly weaker model, as it allows for the possibility that some training examples will be drawn multiple times.  However, the ability to draw quantum states that are in a uniform superposition over all vectors in the training set enables quantum computing to provide advantages over both classical methods that use either access model.

We assume without loss of generality that the training set consists of $N$ unit vectors, $\phi_{1},\ldots,\phi_N$.  If we then define $\Phi_1, \ldots, \Phi_N$ to be the basis vectors whose indices each coincide with a $(B+1)$-bit representation of the corresponding $(\phi_j,y_j)$ where $y_j \in \{-1,1\}$ is the class assigned to $\phi_j$ and let $\Phi_0$ be a fixed unit vector that is chosen to represent a blank memory register.

We introduce the vectors $\Phi_j$ to make it clear that the quantum vectors states used to represent training vectors do not live in the same vector space as the training vectors themselves.  We choose
the quantum state vectors here to occupy a larger space than the training vectors because the Heisenberg uncertainty principle makes it much more difficult for a quantum computer to compute the class that the perceptron assigns to a training vector in such cases.

For example, the training vector $(\phi_j,y_j)\equiv([0,0,1,0]^T,1)$ can be encoded as an unsigned integer $00101\equiv 5$, which in turn can be represented by the unit vector $\Phi=[0,0,0,0,0,1]^T$.  More generally, if $\phi_j\in \mathbb{R}^D$ were a vector of floating point numbers then a similar vector could be constructed by concatenating the binary representations of the $D$ floating point numbers that comprise it with $(y_j+1)/2$ and express the bit string as an unsigned integer, $Q$.  The integer can then be expressed as a unit vector $\Phi: [\Phi]_{q} = \delta_{q,Q}$.  While encoding the training data as an exponentially long vector is inefficient in a classical computer, it is not in a quantum computer because of the quantum computer's innate ability to store and manipulate exponentially large quantum state vectors.

Any machine learning algorithm, be it quantum or classical, needs to have a mechanism to access the training data.  We assume that the data is accessed via an oracle that not only accesses the training data but also determines whether the data is misclassified. To clarify, let $\{u_j: j=1:N\}$ be an orthonormal basis of quantum state vectors that serve as addresses for the training vectors in the database.  Given an input address for the training datum, the unitary operations
 $U$ and $U^\dagger$ allow the quantum computer to access the corresponding vector.  Specifically, for all $j$
\begin{eqnarray}
U[ u_j \otimes \Phi_0] &= u_j \otimes \Phi_j\nonumber\\
U^\dagger [u_j \otimes \Phi_j] &= u_j \otimes \Phi_0.
\end{eqnarray}
Given an input address vector $u_j$, the former corresponds to a database access and the latter inverts the database access.

Note that because $U$ and $U^\dagger$ are linear operators we have that $U\sum_{j=1}^N u_j \otimes \Phi_0 = \sum_j u_j \otimes \Phi_j$.  A quantum computer can therefore access each training vector simultaneously using a single operation, while only requiring enough memory to store one of the $\Phi_j$.  The resultant vector is often called in the physics literature a \emph{quantum superposition of states} and this feature of linear transformations is referred to as quantum parallelism within quantum computing.

The next ingredient that we need is a method to test if the perceptron correctly assigns a training vector addressed by a particular $u_j$.  This process  can be pictured as being performed by a unitary transformation that flips the sign of any basis-vector that is misclassified.  By linearity, a single application of this process flips the sign of any component of the quantum state vector that coincides with a misclassified training vector.  It therefore is no more expensive than testing if a given training vector is misclassified in a classical setting.  We denote the operator, which depends on the perceptron weights $w$, $F_w$ and require that
\begin{equation}
F_w [u_j\otimes \Phi_0] = (-1)^{f_w(\phi_j,y_j)} [u_j\otimes \Phi_0],
\end{equation}
where $f_w(\phi_j)$ is a Boolean function that is $1$ if and only if the perceptron with weights $w$ misclassifies training vector $\phi_j$.
Since the classification step involves computing the dot--products of finite size vectors, this process is efficient given that the $\Phi_j$ are
efficiently computable.

We apply $F_w$ in the following way.  Let $\mathcal{F}_w$ be a unitary operation such that
\begin{equation}
\mathcal{F}_w \Phi_j = (-1)^{f_w(\phi_j,y_j)} \Phi_j.
\end{equation}
$\mathcal{F}_w$ is easy to implement in the quantum computer using a multiply controlled phase gate and
a quantum implementation of the perceptron classification algorithm, $f_w$.
We can then write
\begin{equation}
F_w = U^\dagger (\openone \otimes \mathcal{F}_w) U.\label{eq:Fdecomp}
\end{equation}

\begin{algorithm}[t]
    \caption{Online quantum perceptron training algorithm}
    \label{alg:online}
\begin{algorithmic}
\FOR{$k=1,\ldots, \lceil \log_{3/4} \gamma^2\epsilon \rceil$}
\FOR{$j=1:\lceil \log_c(1/\sin(2\sin^{-1}(1/\sqrt{N})))\rceil$}
\STATE Draw $m$ uniformly from $\{0,\ldots,\lceil c^j \rceil\}$.
\STATE Prepare quantum state $\Psi$.
\STATE $\Psi\gets ((2\Psi\Psi^\dagger -\openone)F_w)^m\Psi$.
\STATE Measure $\Psi$, assume outcome is $u_{q}$.
\STATE $(\phi,y) \gets U^c(q)$.
\IF{$f_{w}(\phi,y)=1$}
\STATE Return $w' \gets w + y \phi$
\ENDIF
\ENDFOR
\ENDFOR
\STATE  Return $w'$
    \end{algorithmic}
\label{alg:1}
\end{algorithm}

Classifying the data based on the phases (the minus signs) output by $F_w$ naturally leads to a very memory efficient training algorithm because only one training vector is ever stored in memory during the implementation of $F_w$ given in~\eq{Fdecomp}.  We can then use $F_w$ to perform Grover's search algorithm, by taking $U_{\rm targ} =F_w$ and $U_{\rm init} = 2\psi\psi^\dagger-\openone$ with $\psi = \Psi:= \frac{1}{\sqrt{N}} \sum_{j=1}^N u_j$, to seek out training vectors that the current perceptron model misclassifies.  This leads to a quadratic reduction in the number of times that the training vectors need to be accessed by $F_w$ or its classical analogue.

In the classical setting, the natural object to query is slightly different.  The oracle that is usually assumed in online algorithms takes the form $U^c: \mathbb{Z} \mapsto \mathbb{C}^D$ where
\begin{equation}
U^c(j) = \phi_j.
\end{equation}
We will assume that a similar function exists in both the classical and the quantum settings for simplicity.  In both cases, we will consider the cost of a query to $U^c$ to be proportional to the cost of a query to $F_w$.

We use these operations in~\alg{1} to implement a quantum search for training vectors that the perceptron misclassifies.  This leads to a quadratic speedup relative to classical methods as shown in the following theorem.

\begin{theorem}\label{thm:1}
Given a training set that consists of unit vectors $\Phi_1,\ldots,\Phi_N$ that are separated by a margin of $\gamma$ in feature space, the number of applications of $F_w$ needed to infer a perceptron model, $w$, such that $P(\exists ~j: f_w(\phi_j)=1)\le \epsilon$ using a quantum computer is $N_{\rm quant}$ where
$$
\Omega(\sqrt{N})\ni N_{\rm quant}\in O\left(\frac{\sqrt{N}}{\gamma^2} \log\left[\frac{1}{\epsilon\gamma^2} \right] \right),
$$
whereas the number of queries to $f_w$ needed in the classical setting, $N_{\rm class}$, where the training vectors are found by sampling uniformly from the training data is bounded by
$$
\Omega(N) \ni N_{\rm class} \in O\left(\frac{{N}}{\gamma^2} \log\left[\frac{1}{\epsilon\gamma^2} \right] \right).
$$
\end{theorem}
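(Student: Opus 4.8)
The plan is to read \alg{1} as a subroutine that, given the current weight vector $w$, returns with high probability a genuine perceptron update whenever $w$ is not yet separating (and leaves $w$ unchanged when it is), and then to chain such calls by means of Novikoff's mistake bound. The argument thus splits into three pieces: correctness of a single call, the number of calls needed, and the per-call query cost.

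\emph{One call.} Suppose $w$ is not a separating hyperplane, so that $K\ge1$ of the $N$ training vectors are misclassified. Writing $\Psi:=\tfrac{1}{\sqrt N}\sum_{j=1}^N u_j$ and letting $P$ project onto the misclassified indices (so $F_w=\openone-2P$ on this register), we get $\|P\Psi\|^2=K/N$ and hence $\theta_a=\sin^{-1}\|P\Psi\|\ge\sin^{-1}(1/\sqrt N)$. The inner loop of \alg{1} is exactly the exponential-search form of amplitude amplification from \sec{grover}: at step $j$ it draws $m$ uniformly from $\{0,\ldots,\lceil c^j\rceil\}$ and applies $m$ Grover iterations to a freshly prepared $\Psi$. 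Because $\sin$ is increasing on $[0,\pi/2]$ and symmetric about $\pi/2$, for every $K$ with $1\le K\le N-1$ one has $\sin(2\theta_a)\ge\sin(2\sin^{-1}(1/\sqrt N))$, so $M_0:=1/\sin(2\theta_a)$ is at most $1/\sin(2\sin^{-1}(1/\sqrt N))\le c^{J}$, where $J$ is the inner-loop upper limit; the averaged-success estimate \eq{plb} then guarantees that step $J$ measures a misclassified index with probability at least $1/4$ (the degenerate case $K=N$ succeeds with probability $1$ for every $m$). Since the draws at distinct $j$, and at distinct outer iterations, are independent, a single outer pass fails to surface a misclassified vector with probability at most $3/4$, so after the $k=\lceil\log_{3/4}\gamma^2\epsilon\rceil$ passes the failure probability of the call is at most $(3/4)^k\le\gamma^2\epsilon$. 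On success the update $w\gets w+y\phi$ is performed; if instead $w$ is already separating then $F_w=\openone$, Grover acts trivially, and the call correctly reports no misclassified vector.

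\emph{Chaining and cost.} By Novikoff's theorem~\cite{Nov63}, a sequence of perceptron updates each correcting a genuinely misclassified vector reaches a separating hyperplane after at most $\lceil1/\gamma^2\rceil$ updates, so the overall training procedure (which calls \alg{1} repeatedly, halting once a call reports no misclassification) invokes \alg{1} at most $\lceil1/\gamma^2\rceil+1$ times. The returned $w$ can fail to separate only if one of the at most $\lceil1/\gamma^2\rceil$ update-producing calls fails; a union bound therefore bounds the total failure probability by $\lceil1/\gamma^2\rceil\gamma^2\epsilon=O(\epsilon)$, which becomes $\le\epsilon$ after absorbing the constant into the logarithm and does not affect the scaling. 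For the query count, step $j$ of the inner loop uses at most $\lceil c^j\rceil$ applications of $F_w$ (plus one query to $U^c$, whose cost we have identified with that of $F_w$), so one outer pass costs $\sum_{j\le J}\lceil c^j\rceil=O(c^{J})=O\!\big(1/\sin(2\sin^{-1}(1/\sqrt N))\big)=O(\sqrt N)$, using $\sin^{-1}(1/\sqrt N)=\Theta(1/\sqrt N)$; multiplying by the $O(\log(1/\epsilon\gamma^2))$ passes and the $O(1/\gamma^2)$ calls yields $O\!\big(\tfrac{\sqrt N}{\gamma^2}\log\tfrac{1}{\epsilon\gamma^2}\big)$. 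The classical bound $N_{\rm class}$ comes from the same accounting with amplitude amplification replaced by plain uniform sampling, which needs $\Theta(N)$ draws to locate a single misclassified vector among $N$; and the lower bounds $N_{\rm quant}\in\Omega(\sqrt N)$, $N_{\rm class}\in\Omega(N)$ follow from the optimality of unstructured quantum search and of random sampling, via a training instance and initial $w$ in which one correct update forces the unique misclassified vector to be located.

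I expect the main obstacle to be the amplitude-amplification bookkeeping inside the \emph{One call} step: one must verify that a \emph{single}, $w$-independent inner-loop length $J$ makes every outer pass succeed with constant probability \emph{simultaneously} over all admissible $K\ge1$ --- in particular handling large $\theta_a$, where the comparison of $\sin(2\theta_a)$ with $\sin(2\sin^{-1}(1/\sqrt N))$ must use both monotonicity of $\sin$ on $[0,\pi/2]$ and the reflection $\sin(\pi-x)=\sin x$, and the corner case $K=N$ --- and that the constant $1/4$ of \eq{plb} survives the uniform randomization of $m$ over $\{0,\ldots,\lceil c^j\rceil\}$. The remaining ingredients (Novikoff's bound, the union bound over updates, and the geometric-series query count) are routine.
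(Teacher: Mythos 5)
Your proposal is correct and follows essentially the same route as the paper: a per-call analysis of the exponential-search amplitude amplification (the content of the paper's Lemma~2), chaining via Novikoff's bound plus a union bound over the $O(1/\gamma^2)$ updates with per-call failure $\gamma^2\epsilon$, and lower bounds by reducing unstructured search to locating the unique misclassified vector. If anything you are slightly more careful than the paper on the uniformity of the inner-loop length over all $K$ misclassified vectors (using the $K\leftrightarrow N-K$ symmetry of $\sin(2\theta_a)$), while the paper is more explicit than you are about the concrete training instance realizing the Grover reduction.
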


We assume in \thm{1} that the training data in the classical case is accessed in a manner that is analogous to the sampling procedure used in the quantum setting.  If instead the training data is supplied by a stream (as in the standard online model) then the upper bound changes to $N_{\rm class} \in O(N/\gamma^2)$ because all $N$ training vectors can be deterministically checked to see if they are correctly classified by the perceptron.  A quantum advantage is therefore obtained if ${N} \gg \log^2(1/\epsilon\gamma^2)$.

In order to prove \thm{1} we need to have two technical lemmas (proven in the appendix).  The first bounds the complexity of the classical analogue to our training method:
\begin{lemma}
Given only the ability to sample uniformly from the training vectors, the number of queries to $f_w$ needed to find a training vector that the current perceptron model fails to classify correctly, or conclude that no such example exists, with probability $1-\epsilon\gamma^2$ is at most $O(N\log(1/\epsilon \gamma^2))$.\label{lem:2}
\end{lemma}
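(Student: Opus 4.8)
The plan is to treat this as a coupon-collector-style argument bounded by a union bound, or more directly as a geometric-trials calculation. The key point is that we are sampling uniformly with replacement from a set of $N$ training vectors, and we wish to either locate a misclassified vector or certify that none exists. First I would observe that if the current perceptron misclassifies at least one training vector, then the probability of drawing a misclassified vector on a single uniform sample is at least $1/N$; hence the probability of failing to draw any misclassified vector in $m$ independent samples is at most $(1-1/N)^m \le e^{-m/N}$. Setting this below the target failure probability $\epsilon\gamma^2$ requires $m \ge N\ln(1/\epsilon\gamma^2)$, which gives the claimed $O(N\log(1/\epsilon\gamma^2))$ bound for the case where a misclassified vector exists.

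Next I would handle the certification case: if the perceptron correctly classifies \emph{every} training vector, then no amount of sampling can ever produce a misclassified example, so the algorithm simply runs its budget of $m = \lceil N\ln(1/\epsilon\gamma^2)\rceil$ queries, sees no misclassification, and (correctly) concludes that none exists — with zero probability of error in this branch. Combining the two cases, the only source of failure is the first branch, where we may wrongly conclude "none exists" when in fact one does; this happens with probability at most $\epsilon\gamma^2$ by the choice of $m$ above. Thus the total query count is $O(N\log(1/\epsilon\gamma^2))$ with failure probability bounded by $\epsilon\gamma^2$, as required. Each query invokes $f_w$ once to test the sampled vector, so the query complexity to $f_w$ matches this bound.

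I do not expect a serious obstacle here — the argument is elementary. The one subtlety worth stating carefully is the distinction between the two branches: the lemma must simultaneously cover "find a bad vector" and "conclude correctly that none exists," and it is important to note that the certification branch is error-free, so the stated failure probability $\epsilon\gamma^2$ pertains entirely to false negatives in the search branch. A second minor point is that the bound $\Pr[\text{single draw is bad}] \ge 1/N$ is tight in the worst case (exactly one misclassified vector), so the $N\log$ scaling cannot be improved by this method; this is exactly the classical baseline that Grover's search will quadratically beat in \thm{1}. I would also remark that the $\gamma^2$ inside the logarithm is immaterial to the analysis of this lemma in isolation — it is chosen so that when Lemma~\ref{lem:2} is invoked $O(1/\gamma^2)$ times inside the proof of \thm{1}, a union bound over all invocations yields overall failure probability $O(\epsilon)$.
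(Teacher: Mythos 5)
Your proposal is correct and follows essentially the same route as the paper's proof: sample $\lceil N\log(1/\epsilon\gamma^2)\rceil$ times, bound the failure probability by $(1-1/N)^k \le e^{-k/N}$ when a misclassified vector exists, and note that the certification branch is error-free. The additional remarks on the role of the $\gamma^2$ factor and the tightness of the $1/N$ bound are accurate but not needed for the lemma itself.
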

The second proves the correctness of~\alg{1} and bounds the complexity of the algorithm:
\begin{lemma}
Assuming that the training vectors $\{\phi_1,\ldots,\phi_N\}$ are unit vectors and that they are drawn from two classes separated by a margin of $\gamma$ in feature space,~Algorithm 2 will either update the perceptron weights, or conclude that the current model provides a separating hyperplane between the two classes, using a number of queries to $F_w$ that is bounded above by $O(\sqrt{N}\log(1/\epsilon\gamma^2))$ with probability of failure at most $\epsilon \gamma^2$.\label{lem:1}
\end{lemma}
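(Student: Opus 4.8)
The plan is to analyze the inner loop of Algorithm~\ref{alg:1} as an instance of amplitude amplification with an unknown success probability, and the outer loop as a repetition that boosts the overall success probability. First I would fix a perceptron weight vector $w$ and let $t$ denote the number of training indices $j$ for which $f_w(\phi_j,y_j)=1$, i.e.\ the number of currently misclassified vectors. If $t=0$ the current model is already a separating hyperplane and there is nothing to prove, so assume $t\ge 1$. With the uniform initial state $\Psi=\frac{1}{\sqrt N}\sum_j u_j$ and $U_{\rm targ}=F_w$, the relevant angle is $\theta_a=\sin^{-1}(\sqrt{t/N})$, so $\sin^2\theta_a=t/N\ge 1/N$, hence $\theta_a\ge\sin^{-1}(1/\sqrt N)$. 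The point of the $j$-loop, which runs $j=1,\ldots,\lceil\log_c(1/\sin(2\sin^{-1}(1/\sqrt N)))\rceil$ times with $m$ drawn uniformly from $\{0,\ldots,\lceil c^j\rceil\}$, is exactly the exponential-search strategy described in the Grover's-search section: once $c^j\ge M_0=1/\sin(2\theta_a)$, the averaged success probability $p(v\in\nu_{\rm good};M)$ is at least $1/4$ by inequality~\eq{plb}. Since $\theta_a\ge\sin^{-1}(1/\sqrt N)$ implies $1/\sin(2\theta_a)\le 1/\sin(2\sin^{-1}(1/\sqrt N))$, the chosen upper limit on $j$ guarantees that some iteration reaches $c^j\ge M_0$.

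Next I would assemble the probability bound. In the iteration where $c^j\ge M_0$, a single measurement of $\Psi$ returns an index $u_q$ with $f_w(\phi_q,y_q)=1$ with probability at least $1/4$; conditioned on that, the check $f_w(\phi,y)=1$ succeeds and the algorithm performs the update $w'\gets w+y\phi$. So one pass of the full $j$-loop fails to produce an update (when $t\ge 1$) with probability at most $3/4$. The outer $k$-loop repeats this $\lceil\log_{3/4}(\gamma^2\epsilon)\rceil$ times, so the probability that all passes fail is at most $(3/4)^{\log_{3/4}(\gamma^2\epsilon)}=\gamma^2\epsilon$. This establishes that, with probability at least $1-\epsilon\gamma^2$, Algorithm~\ref{alg:1} either returns an updated weight vector or (in the case $t=0$) correctly reports a separating hyperplane.

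For the query count, I would bound the number of applications of $F_w$ by summing the cost over all iterations. Each inner iteration uses $m\le\lceil c^j\rceil$ applications of $(2\Psi\Psi^\dagger-\openone)F_w$, hence $O(c^j)$ queries to $F_w$; summing the geometric series over $j$ from $1$ to $\lceil\log_c(1/\sin(2\sin^{-1}(1/\sqrt N)))\rceil$ gives $O(1/\sin(2\sin^{-1}(1/\sqrt N)))$ per outer pass. Using $\sin^{-1}(1/\sqrt N)\ge 1/\sqrt N$ and the small-angle bound $\sin(2x)\ge$ (a constant times) $x$ for the relevant range, this is $O(\sqrt N)$. Multiplying by the $O(\log(1/\epsilon\gamma^2))$ outer iterations yields the claimed $O(\sqrt N\log(1/\epsilon\gamma^2))$ bound on the number of queries to $F_w$. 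I would also note that each query to $F_w$ via the decomposition~\eq{Fdecomp} costs two applications of $U/U^\dagger$ plus one application of $\mathcal F_w$, so this does not change the asymptotics.

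The main obstacle I anticipate is not any single step but making the exponential-search bookkeeping fully rigorous: one must be careful that the base $c\in(1,2)$, the ceilings in the loop bounds, and the uniform draw of $m$ from $\{0,\ldots,\lceil c^j\rceil\}$ all line up so that (i) the inner loop is guaranteed to reach a value $M=\lceil c^j\rceil\ge M_0$ and (ii) at that iteration the hypotheses of~\eq{plb} ($M\ge M_0$) genuinely hold, rather than being off by rounding. A secondary subtlety is the edge case $t=N$ (every vector misclassified), where $\theta_a=\pi/2$ and $\sin(2\theta_a)=0$, so $M_0$ is formally infinite; here, however, the very first measurement already returns a misclassified vector with probability $1$, so the claim holds trivially and this case must simply be handled separately in the argument. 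Finally, I would want to double-check the slight notational mismatch in the algorithm ("$\Psi$" versus the "$\psi$" of the Grover discussion, and "Algorithm 2" versus Algorithm~\ref{alg:1}) so the proof references the correct object.
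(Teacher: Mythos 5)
Your proposal is correct and follows essentially the same route as the paper's own proof: analyze the inner amplitude-amplification step, use the exponential-search argument to show the middle loop reaches $M\ge M_0$ with per-pass success probability at least $1/4$ and query cost $O(\sqrt{N})$, and then boost via the $\lceil\log_{3/4}(\epsilon\gamma^2)\rceil$ outer repetitions. You are in fact slightly more careful than the paper, which handles the $\theta_a=0$ case but silently ignores the $t=N$ degeneracy where $\sin(2\theta_a)=0$; your separate treatment of that case is a genuine (if minor) improvement.
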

After stating these results, we can now provide the proof of~\thm{1}.
\begin{proofof}{\thm{1}}
The upper bounds follow as direct consequences of~\lem{1} and~\lem{2}.  Novikoff's theorem~\cite{Nov63,Fre99} states that the algorithms described in both lemmas must be applied at most $1/\gamma^2$ times before finding the result.  However, either the classical or the quantum algorithm may fail to find a misclassified vector at each of the $O(1/\gamma^2)$ steps.  The union bound states that the probability that this happens is at most the sum of the respective probabilities in each step.  These probabilities are constrained to be $\gamma^2\epsilon$, which means that the total probability of failing to correctly find a mistake is at most $\epsilon$ if both algorithms are repeated $1/\gamma^2$ times (which is the worst case number of times that they need to be repeated).

The lower bound on the quantum query complexity follows from contradiction.  Assume that there exists an algorithm that can train an arbitrary perceptron using $o(\sqrt{N})$ query operations. Now we want to show that unstructured search with one marked element can be expressed as a perceptron training algorithm.  Let $w$ be a known set of perceptron weights and assume that the perceptron only misclassifies one vector $\phi_1$.  Thus if perceptron training succeeds then $w$ the value of $\phi_1$ can be extracted from the updated weights.  This training problem is therefore equivalent to searching for a misclassified vector.  Now let $\phi_j = [1 \oplus F(j), F(j)]^T \otimes \chi_j$ where $\chi_j$ is a unit vector that represents the bit string $j$ and $F(j)$ is a Boolean function.  Assume that $F(0)=1$ and $F(j)=0$ if $j\ne 0$, which is without loss of generality equivalent to Grover's problem~\cite{Gro96,BBHT96}.  Now assume that $\phi_j$ is assigned to class $2F(j)-1$ and take $w =  [1/\sqrt{2},1/\sqrt{2}]^T \otimes \frac{1}{\sqrt{N}}\sum_j \chi_j$.  This perceptron therefore misclassifies $\phi_0$ and no other vector in the training set.  Thus updating the weights yields $\phi_j$, which in turn yields the value of $j$ such that $F(j)=1$, and therefore Grover's search reduces to perceptron training.

Since Grover's search reduces to perceptron training in the case of one marked item the lower bound of $\Omega(\sqrt{N})$ queries for Grover's search~\cite{BBHT96} applies to perceptron training.  Since we assumed that perceptron training requires $o(\sqrt{N})$ queries this is a contradiction.  Thus the true lower bound must be $\Omega(\sqrt{N})$.

We have assumed that in the classical setting that the user only has access to the training vectors through an oracle that is promised to draw a uniform sample from $\{(\phi_1,y_1),\ldots,(\phi_N,y_N)\}$.  Since we are counting the number of queries to $f_w$ it is clear that in the worst possible case that the training vector that the perceptron makes a mistake on can be the last unique value sampled from this list.  Thus if the query complexity were $o(N)$ there would be a contradiction, hence the query complexity is $\Omega(N)$ classically.

\end{proofof}

\section{Quantum version space perceptron}
The strategy for our quantum version space training algorithm is to pose the problem of determining a separating hyperplane as search. Specifically, the idea is to first generate $K$ sample hyperplanes $w_1, \ldots, w_K$ from a spherical Gaussian distribution $\mathcal{N}(0,\openone)$. Given a large enough $K$, we are guaranteed to have at least one hyperplane amongst the samples that would lie in the version space and perfectly separate the data. As discussed earlier Grover's algorithm can provide quadratic speedup over the classical search consequently the efficiency of the algorithm is determined by $K$. \thm{ksamples} provides an insight on how to determine this number of hyperplanes to be sampled.

\begin{theorem}\label{thm:ksamples}
Given a training set that consists of $d$-dimensional unit vectors $\Phi_1,\ldots,\Phi_N$ with labels $y_1,\ldots,y_N$ that are separated by a margin of $\gamma$ in feature space, then a $D$-dimensional vector $w$ sampled from $\mathcal{N}(0,\openone)$ perfectly separates the data with probability $\Theta(\gamma)$.
\end{theorem}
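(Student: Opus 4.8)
The plan is to recast the statement as a computation of a solid angle on the unit sphere and then to reduce that solid angle to a low-dimensional Gaussian cone probability. First I would remove the scale: whether $w$ separates the data depends only on the signs of the $y_iw^T\phi_i$, hence only on the direction $\hat w=w/\|w\|$, and for $w\sim\mathcal N(0,\openone)$ this direction is uniform on $S^{D-1}$. So the required probability equals the normalized solid-angle measure $\mu(\VS)$ of the version space, and it suffices to prove $\mu(\VS)=\Theta(\gamma)$.

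Next I would bring in the margin geometry. Writing $\tilde\phi_i:=y_i\phi_i$, the version space is the polyhedral cone $\{w:\langle w,\tilde\phi_i\rangle>0\ \forall i\}$, and a standard minimax argument over the unit ball and the probability simplex identifies $\gamma$ with $\operatorname{dist}\!\big(0,\operatorname{conv}\{\tilde\phi_i\}\big)$. Let $p^\star=\sum_i\lambda_i^\star\tilde\phi_i$ be the nearest point, so $\|p^\star\|=\gamma$ and $w^\star:=p^\star/\gamma$ is the max-margin direction with $\langle w^\star,\tilde\phi_i\rangle\ge\gamma$, and equality on the support set $S=\{i:\lambda_i^\star>0\}$. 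On $S$ I would decompose $\tilde\phi_i=\gamma w^\star+\sqrt{1-\gamma^2}\,u_i$ with $u_i\perp w^\star$ of unit norm; optimality of $p^\star$ forces $\sum_{i\in S}\lambda_i^\star u_i=0$, so in particular when $|S|=2$ the two perpendicular directions are antipodal, $u_2=-u_1$.

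The core of the argument is then a two-dimensional picture. Writing $w=\alpha w^\star+v$ with $v\perp w^\star$, the numbers $\alpha$ and $\langle v,u_1\rangle$ are two independent standard Gaussians. When the margin is witnessed in a plane — in particular whenever $D=2$, the situation of \fig{version} — the binding constraints reduce to $\alpha>0$ together with $|\langle v,u_1\rangle|<\tfrac{\gamma}{\sqrt{1-\gamma^2}}\,\alpha$, i.e.\ to $(\alpha,\langle v,u_1\rangle)$ lying in a planar cone of half-angle $\arctan\!\big(\tfrac{\gamma}{\sqrt{1-\gamma^2}}\big)=\arcsin\gamma$; since that pair is a standard two-dimensional Gaussian, $\mu(\VS)=\tfrac1\pi\arcsin\gamma=\Theta(\gamma)$. (Equivalently, on the circle $\VS$ is an arc whose endpoints are the normals of the two extreme data half-planes, and the margin-distance formula makes its length $2\arcsin\gamma$.) The two extreme support constraints are necessary, which gives the upper bound, and in this low-dimensional case nothing else cuts into the cone, which gives the lower bound.

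The main obstacle I anticipate is showing that $\mu(\VS)$ stays $\Theta(\gamma)$ for all $D$, not merely when the margin lives in a plane. A naive spherical cap of angular radius $\gamma$ about $w^\star$ does sit inside $\VS$, but it has measure only $\gamma^{D-1}$, hopelessly small, so the lower bound genuinely needs the wedge structure above rather than a cap. The delicate point is that once $|S|\ge 3$ the $u_i$ can spread over several dimensions of $w^{\star\perp}$ — for instance three unit vectors at $120^\circ$ — and then the constraints $\alpha\gamma+\sqrt{1-\gamma^2}\langle v,u_i\rangle>0$ pin $v$ inside a ball of radius $\Theta(\alpha\gamma)$, which would drag $\mu(\VS)$ down to $\Theta(\gamma^2)$; non-support vectors can likewise erode the two-dimensional cone from directions orthogonal to $u_1$. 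I therefore expect the crux — and the place where an extra hypothesis on the effective dimension, or a much more careful accounting of the perpendicular Gaussian mass, appears to be required — to be the handling of support configurations in which $0$ is a convex combination of three or more of the $u_i$.
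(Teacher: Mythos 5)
Your reformulation is sound and, in fact, more careful than the argument the paper actually gives. The paper's proof does not work with the solid angle of the polyhedral cone at all: it perturbs the max-margin classifier $u$ by the Gaussian $w$ and asks for the probability that a \emph{single} point $\Phi^*$ on the margin stays inside the band $-\gamma < y\, w^T\Phi^* < \gamma$, where the upper limit is justified by an imagined second margin point of the opposite class sitting at $2\gamma$ along $u$. That is a one-dimensional Gaussian band computation yielding $\mathrm{erf}(\gamma/\sqrt{2})=\Theta(\gamma)$, and it corresponds exactly to your $|S|=2$ antipodal wedge; it silently assumes that these two opposing constraints are the only ones that bind. So the two approaches agree on the tractable case, but yours is the rigorous version (uniform direction on the sphere, exact wedge of angle $2\arcsin\gamma$), whereas the paper's is a heuristic that also conflates ``$w$ separates the data'' with ``$u+w$ separates the data.''

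The obstacle you flag at the end is genuine, and the paper's proof does not address it. Your three-vectors-at-$120^\circ$ configuration does have margin exactly $\gamma$ (the symmetric direction $w^\star$ is optimal because $0\in\mathrm{conv}\{u_i\}$), yet the feasible $v$ is then confined to a triangle of inradius $\Theta(\alpha\gamma)$ inside a two-dimensional orthogonal complement, so $P(w\in\VS)=\Theta(\gamma^2)$. Hence the lower bound $\Omega(\gamma)$ in \thm{ksamples} fails as stated for $D\ge 3$ without a further hypothesis on the support configuration; only the upper bound $O(\gamma)$, which follows from your two necessary extreme constraints, survives in general. You should not expect to close this gap by a finer accounting of the perpendicular Gaussian mass: the statement itself has to be weakened to something like $\Omega(\gamma^{s})$ with $s$ the affine dimension of the active support set, or restricted to the planar-margin case, and the downstream $O(1/\sqrt{\gamma})$ claim of \thm{qversion} inherits the same caveat.
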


The proof of this theorem is provided in the supplementary material.  The consequence of \thm{ksamples} stated below is that the expected number of samples $K$, required such that a separating hyperplane exists in the set, only needs to scale as $O(\frac{1}{\gamma})$. Thus if amplitude amplification is used to boost the probability of finding a vector in the version space then the resulting quantum algorithm will need only $O(\frac{1}{\sqrt{\gamma}})$ quantum steps on average.

Next we show how to use Grover's algorithm to search for a hyperplane that lies in the version space. Let us take $K=2^m$, for positive integer $m$. Then given $w_1, \ldots, w_K$ be the sampled hyperplanes, we represent $W_1,\ldots,W_K$ to be vectors that encode a binary representation of these random perceptron vectors. In analogy to $\Phi_0$, we also define $W_0$ to be a vector that represents an empty data register.  We define the unitary operator $V$ to generate these weights given an address vector $u_j$ using the following
\begin{equation}
V[ u_j \otimes W_0] = [u_j \otimes W_j].
\end{equation}
In this context we can also think of the address vector, $u_j$, as representing a seed for a pseudo--random number generator that yields perceptron weights $W_j$.

Also let us define the classical analogue of $V$ to be $V^c$ which obeys $V^c(j)=w_j$.
Now using $V$ (and applying the Hadamard transform~\cite{NC00}) we can prepare the following quantum state
\begin{equation}
\Psi := \frac{1}{\sqrt{K}}\sum_{k=1}^K u_k \otimes W_k,
\end{equation}
which corresponds to a uniform distribution over the randomly chosen $w$.

Now that we have defined the initial state, $\Psi$, for Grover's search we need to define an oracle that marks the vectors inside the version space.  Let us define the operator $\hat{\mathcal{F}}_{\phi,y}$ via
\begin{equation}
\hat{\mathcal{F}}_{\phi,y} [u_j\otimes W_0] = (-1)^{1+f_{w_j}(\phi,y)}[u_j\otimes W_0].
\end{equation}
This unitary operation looks at an address vector, $u_j$, computes the corresponding perceptron model $W_j$, flips the sign of any component of the quantum state vector that is in the half space in version space specified by $\phi$ and then uncomputes $W_j$.  This process can be realized using a quantum subroutine that computes $f_w$, an application of $V$ and $V^\dagger$ and also the application of a conditional phase gate (which is a fundamental quantum operation that is usually denoted $Z$)~\cite{NC00}.

The oracle~$\hat{\mathcal{F}}_{\phi,y}$ does not allow us to directly use Grover's search to rotate a quantum state vector that is outside the version space towards the version space boundary because it effectively only checks one of the half--space inequalities that define the version space.
It can, however, be used to build an operation, $\hat{G}$, that reflects about the version space:
\begin{equation}
\hat{{G}} [u_j \otimes W_0] = (-1)^{1+(f_{w_j}(\phi_1,y_1)\vee\cdots\vee f_{w_j}(\phi_N,y_N)) }[u_j \otimes W_0].
\end{equation}
   The operation $\hat G$ can be implemented using $2N$ applications of $\hat{\mathcal{F}}_{\phi}$  as well as a sequence of $O(N)$ elementary quantum gates, hence we cost a query to $\hat{G}$ as $O(N)$ queries to $\hat{\mathcal{F}}_{\phi,y}$.

\begin{algorithm}[t]
    \caption{Quantum version space perceptron training algorithm}
\begin{algorithmic}
\FOR{$k=1,\ldots, \lceil \log_{3/4} \epsilon \rceil$}
\FOR{$j=1:\lceil \log_c(1/\sin(2\sin^{-1}(1/\sqrt{K})))\rceil$}
\STATE Draw $m$ uniformly from $\{0,\ldots,\lceil c^j \rceil\}$.
\STATE Prepare quantum state $\Psi=\frac{1}{\sqrt{K}}\sum_{p=1}^K u_p\otimes W_0$.
\STATE $\Psi\gets ((2\Psi\Psi^\dagger -\openone)G)^m\Psi$.
\STATE Measure $\Psi$, assume outcome is $u_{q}$.
\STATE $w \gets V^c(q)$.
\IF{$f_w(\phi_\ell,y_\ell)=0$ for all $\ell\in \{1,\ldots,N\}$}
\STATE Return $w$
\ENDIF
\ENDFOR
\ENDFOR
\STATE  Return $w=0$
    \end{algorithmic}
\label{alg:2}
\end{algorithm}

We use these components in~\alg{2} to, in effect, amplify the margin between the two classes from $\gamma$ to $\sqrt{\gamma}$.  We give the asymptotic scaling of this algorithm in the following theorem (see appendix for proof).

\begin{theorem}\label{thm:qversion}
Given a training set that consists of unit vectors $\Phi_1,\ldots,\Phi_N$ that are separated by a margin of $\gamma$ in feature space, the number of queries to $\hat{\mathcal{F}}_{\phi,y}$ needed to infer a perceptron model with probability at least $1-\epsilon$, $w$, such that $w$ is in the version space  using a quantum computer is $N_{\rm quant}$ where
$$
N_{\rm quant}\in O\left(\frac{{N}}{\sqrt{\gamma}} \log^{3/2}\left[\frac{1}{\epsilon} \right] \right).
$$
\end{theorem}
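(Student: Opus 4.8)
The plan is to reduce \thm{qversion} to two ingredients already in hand: \thm{ksamples}, which guarantees that a single Gaussian weight vector lands in the version space with probability $p=\Theta(\gamma)$, and the exponential-search form of amplitude amplification analyzed around \eq{plb}, which locates a marked element in $O(1/\sqrt{a})$ oracle calls even when the marked fraction $a$ is not known in advance. The only genuinely new work is to fix the sample size $K$ correctly and to combine the two distinct failure modes of \alg{2} with a union bound.

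First I would set the number of sampled hyperplanes to $K=\Theta\!\big(\gamma^{-1}\log(1/\epsilon)\big)$. Since $w_1,\dots,w_K\sim\mathcal N(0,\openone)$ are independent, \thm{ksamples} gives that the probability that \emph{none} of them lies in the version space is at most $(1-p)^{K}\le e^{-pK}\le\epsilon/2$ for a suitable constant in $K$. This failure mode cannot be repaired by any amount of amplification, because the superposition $\Psi$ is built from a fixed set of seeds, so $K$ must be taken this large at the outset. Conditioned on the complementary event, at least one of the $K$ address vectors $u_k$ is marked by the version-space reflection $\hat G$, so the marked fraction satisfies $a\ge 1/K$; hence $\theta_a=\sin^{-1}(\sqrt a)\ge\sin^{-1}(1/\sqrt K)$ and $M_0:=1/\sin(2\theta_a)\le 1/\sin\!\big(2\sin^{-1}(1/\sqrt K)\big)$, which is exactly the upper limit of the inner loop index $j$ in \alg{2}.

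Next I would analyze one pass of the inner loop. Because $c^{j}$ ranges up to a value at least $M_0$, some iteration has $c^{j}\ge M_0$, and for that iteration the averaging identity behind \eq{plb} shows that drawing $m$ uniformly from $\{0,\dots,\lceil c^{j}\rceil\}$ and applying $\big((2\Psi\Psi^\dagger-\openone)\hat G\big)^{m}$ produces a marked $u_q$ — and therefore, after the classical verification $f_w(\phi_\ell,y_\ell)=0$ for all $\ell$, a genuine version-space hyperplane — with probability at least $1/4$. Thus each pass of the inner loop succeeds with probability $\ge 1/4$, so running the outer loop $\lceil\log_{3/4}\epsilon\rceil=O(\log(1/\epsilon))$ times fails with probability at most $(3/4)^{\Theta(\log(1/\epsilon))}\le\epsilon/2$; combined with the sampling bound, the union bound gives total failure probability at most $\epsilon$ (rescaling $\epsilon$ by a constant, which changes $K$ and the outer count only by constants, if one insists on the exact threshold).

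It remains to count oracle calls. Within one pass of the inner loop the number of applications of $\hat G$ is $\sum_{j}O(c^{j})$, a geometric series whose largest term is $O(\sqrt K)$, hence $O(\sqrt K)$ in total (the $O(\log K)$ verification steps are lower order); multiplying by the $O(\log(1/\epsilon))$ outer iterations gives $O\!\big(\sqrt K\,\log(1/\epsilon)\big)$ calls to $\hat G$, and since $\hat G$ costs $O(N)$ calls to $\hat{\mathcal F}_{\phi,y}$, the total is $O\!\big(N\sqrt K\,\log(1/\epsilon)\big)=O\!\big(N\gamma^{-1/2}\log^{3/2}(1/\epsilon)\big)$, as claimed. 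I expect the main obstacle to be the middle step: one must commit to the pessimistic bound $a\ge 1/K$ (so that $\theta_a$ may be as small as order $\sqrt{\gamma/\log(1/\epsilon)}$), verify that the inner loop's range of $j$ is wide enough to straddle $M_0$ in that worst case, and extract a constant lower bound on the per-pass success probability from \eq{plb}; the geometric-series sum and the union bound are routine afterwards.
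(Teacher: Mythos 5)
Your proposal is correct and follows essentially the same route as the paper: bound $K\in\Theta(\gamma^{-1}\log(1/\epsilon))$ via \thm{ksamples} and the tail bound $(1-\Theta(\gamma))^{K}\le e^{-\Theta(\gamma K)}$, run exponential-search amplitude amplification over the $K$ seeds to get $O(\sqrt{K}\log(1/\epsilon))$ calls to $\hat G$, cost each $\hat G$ at $O(N)$ calls to $\hat{\mathcal{F}}_{\phi,y}$, and close with a union bound over the two failure modes. The only cosmetic difference is that you inline the Grover/exponential-search analysis around \eq{plb}, whereas the paper delegates it to the proof of \lem{1} with $N$ replaced by $K$.
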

\begin{proof}
The proof of the theorem follows directly from bounds on $K$ and the validity of~\alg{2}.
It is clear from previous discussions that~\alg{2} carries out Grover's search, but instead
of searching for a $\phi$ that is misclassified it instead searches for a $w$ in version space.  Its validity therefore follows by following
the exact same steps followed in the proof of~\lem{1} but with $N=K$.  However, since the algorithm need is not repeated $1/\gamma^2$ times in this context we can replace $\gamma$ with $1$ in the proof.  Thus if we wish to have a probability of failure of at most $\epsilon'$ then the number of queries made to $\hat G$
is in $$O(\sqrt{K}\log(1/\epsilon')).$$
This also guarantees that if any of the $K$ vectors are in the version space then the probability of failing to find that vector is at most $\epsilon'$.

Next since one query to $\hat G$ is costed at $N$ queries to $\hat{\mathcal{F}}_{\phi,y}$ the query complexity (in units of queries to $\hat{\mathcal{F}}_{\phi,y}$)  becomes $O(N\sqrt{K}\log(1/\epsilon'))$.  The only thing that then remains is to bound the value of $K$ needed.

The probability of finding a vector in the version space is $\Theta(\gamma)$ from~\thm{ksamples}.  This means that there exists $\alpha>0$ such that the probability of failing to find a vector in the version space $K$ times is at most
\begin{equation}
(1-\alpha \gamma)^K \le e^{-\alpha \gamma K}.
\end{equation}
Thus this probability is at most $\delta$ for
\begin{equation}
K \in \Omega\left( \frac{1}{\gamma} \log(1/\delta)\right).
\end{equation}
It then suffices to pick $K\in \Theta\left( \frac{1}{\gamma} \log(1/\delta)\right)$ for the algorithm.

The union bound implies that the probability that either none of the vectors lie in the version space or that Grover's search failing to find such an element is at most $\epsilon'+\delta\le \epsilon$.  Thus it suffices to pick $\epsilon'\in \Theta(\epsilon)$ and $\delta \in \Theta(\epsilon)$ to ensure that the total probability is at most $\epsilon$.  Therefore the total number of queries made to $\hat{\mathcal{F}}_{\phi,y}$ is in $O(\frac{N}{\sqrt{\gamma}} \log^{3/2}(1/\epsilon))$ as claimed.
\end{proof}

The classical algorithm discussed previously has complexity $O(N\log(1/\epsilon)/\gamma)$, which follows from the fact (proven in the appendix) that $K\in \Theta(\log(1/\epsilon)/\gamma)$ suffices to make the probability of not drawing an element of the version space at most $\epsilon$.  This demonstrates a quantum advantage if $\frac{1}{\gamma} \gg \log(1/\epsilon)$, and illustrates that quantum computing can be used to boost the effective margins of the training data.  Quantum models of perceptrons therefore not only provide advantages in terms of the number of vectors that need to be queried in the training process, they also can make the perceptron much more perceptive by making training less sensitive to small margins.

These performance improvements can also be viewed as mistake bounds for the version space perceptron.  The inner loop in \alg{2} attempts to sample from the version space and then once it draws a sample it tests it against the training vectors to see if it errs on any example.  Since the inner loop is repeated $O(\sqrt{K} \log(1/\epsilon))$ times, the maximum number of misclassified vectors that arises from this training process is from~\thm{ksamples} $O(\frac{1}{\sqrt{\gamma}} \log^{3/2}(1/\epsilon))$ which, for constant $\epsilon$, constitutes a quartic improvement over the standard mistake bound of $1/\gamma^2$~\cite{Nov63}.

\section{Conclusion}
We have provided two distinct ways to look at quantum perceptron training that each afford different speedups relative to the other.  The first provides a quadratic speedup with respect to the size
of the training data.  We further show that this algorithm is asymptotically optimal in that if a super--quadratic speedup were possible then it would violate known lower bounds for quantum searching.  The second provides a quadratic reduction in the scaling of the training time (as measured by the number of interactions with the training data) with the margin between the two classes.
This latter result is especially interesting because it constitutes a quartic speedup relative to the typical perceptron training bounds that are usually seen in the literature.

Perhaps the most significant feature of our work is that it demonstrates that quantum computing can provide provable speedups for perceptron training, which is a foundational machine learning method.  While our work gives two possible ways of viewing the perceptron model through the lens of quantum computing, other quantum variants of the perceptron model may exist.  Seeking new models for perceptron learning that deviate from these classical approaches may not only provide a deeper understanding of what form learning takes within quantum systems, but also may lead to richer classes of quantum models that have no classical analogue and are not efficiently simulatable on classical hardware.  Such models may not only revolutionize quantum learning but also lead to a deeper understanding of the challenges and opportunities that the laws of physics place on our ability to learn.


\appendix
\section{Proofs}
Here we provide proofs of several of our results stated in the main body.  In particular, we give the proofs of  Theorem 2, Lemma 1 and Lemma 2 here.

\begin{proofof}{Theorem 2}
Given that the margin of the training set if $\gamma$ there exist a hyperplane $u$ such that $y_i \cdot u^T \Phi_i > \gamma$ for all $i$.  If $w$ be a sample from $\mathcal{N}(0,\openone)$, then lets first compute what is the probability that perturbing the maximum margin classifier $u$ by amount $w$ would lead still lead to a perfect separation. If we consider a data point  $\Phi^*$ that lies on the margin, i.e. $y_i \cdot u^T\Phi^* = \gamma$, we are interested in the probability that $y_i \cdot (u + w)^T \Phi^*  >  0$ and $y_i \cdot (u + w)^T \Phi^*  <  2\gamma$. The first inequality corresponds to preventing misclassification of $\Phi^*$, while the second one corresponds to preventing misclassification of the point belonging to the other class and on the margin. This is same as asking what is the probability that:
\begin{equation}
-\gamma< y_i \cdot w^T \Phi^* <  \gamma 
\end{equation}

Let us define $z_i := y_i \cdot w^T \Phi^*$ . Since $w \sim \mathcal{N}(0,\openone)$ and $\|\Phi\| = 1$ we can show that $z_i \sim \mathcal{N}(0, 1)$. Thus, we can write the probability that $ -\gamma < z_i < \gamma$ as:
\begin{equation}
P(-\gamma < z_i < \gamma) = \mbox{erf}\left(\frac{\gamma}{\sqrt{2}}\right)
\end{equation}
Here $\mbox{erf}(z) = \frac{1}{\sqrt{\pi}} \int_{-z}^{z}{e^{-\frac{x^2}{2}} dx}$ is the error function for the standard normal distribution. Since $\Phi^*$ is on the margin,  the probability that the sample $w$ will lie in the version space can be simply characterized as the above probability $P(-\gamma< z_i < \gamma)$. It is straightforward to show using Maclaurin series expansion that:
\begin{equation}
P(w\! \in\! \VS) = \frac{2}{\sqrt{\pi}}\left(\frac{\gamma}{\sqrt{2}}\! -\! \frac{\gamma^3}{2^{3/2}3}\! +\! \frac{\gamma^5}{2^{5/2}10}\! -\! \frac{\gamma^7}{2^{7/2}42} \ldots\right)
\end{equation}

Note, that in our case $\Phi_i$ are unit normalized for all $i$, thus $\gamma < 1$. Which in turn implies that most of the higher order terms will be close to zero in the limit of small $\gamma$ and:
\begin{equation}
P(w \in \VS) = \frac{\gamma}{\sqrt{2\pi}}+O(\gamma^3),
\end{equation}
which proves our theorem for $\gamma<1$.
\end{proofof}

\begin{proofof}{Lemma 1}
There exists a simple algorithm for achieving this upper bound.  Draw $N\lceil \log(1/\epsilon\gamma^2)\rceil$ samples from the set of training vectors.  If any are misclassified perform the update, otherwise report that the model classifies all the data.

The proof of validity of this algorithm is trivial and the success probability claim is also quite simple.  Given that we draw $k$ samples from the distribution the probability that any of them fail to detect a mistake, given such a mistake exists, is at most
\begin{equation}
(1-1/N)^k\le \exp(-k/N).
\end{equation}
If we want this error to be at most $\delta$ then it suffices to take
\begin{equation}
k = \lceil N\log(1/\delta)\rceil.
\end{equation}
One query to $f_w$ is required per $k$, which means that $k$ is also equal to the query complexity.
Thus if at least one mistake occurs then the algorithm will find it with the aforementioned probability if $\delta = \epsilon \gamma^2$.  If such an example does not exist, then the algorithm will correctly conclude that a separating hyperplane has already been found.  Therefore in either case the success probability is at least $1-\epsilon \gamma^2$ as required.
\end{proofof}

\begin{proofof}{Lemma 2}
In order to see this, let us first examine the inner loop of Algorithm 2, which involves performing the update $\Psi\gets ((2\Psi\Psi^\dagger -\openone)F_w)^m\Psi$.  We know from our discussion of Grover's algorithm in the main body that if we define the initial probability of successfully find a mistake to be $\sin^2(\theta_a)$ then the probability of finding a $j$ such that $\mathcal{F}_w \Phi_j=-1$ after $m$ updates is $\sin^2((2m+1)\theta_a)$.  Since this corresponds to finding a vector that the perceptron fails to classify properly, these steps amplify the probability of finding a perceptron error.  The query to $U^c$ that follows identifying the index of this training vector then converts this result into a classical bitstring that can then be used to perform a perceptron update.  Therefore the inner loop performs a perceptron update with probability  $\sin^2((2m+1)\theta_a)$ using $m$ queries to $F_w$.

Under the assumption that $c\in (1,2)$ the next loop repeats this sampling process until  $m\ge M_0$ in order to ensure that the probability of finding a misclassified element is at least $1/4$~\cite{BHMT02}.  This can be seen using the following argument.  First we need to show that the exponential search heuristic requires $O(M_0)$ queries.  Each iteration of the middle loop requires requires a number of queries that is at most proportional to $\lceil c^i \rceil$.  Therefore the total number of queries is at most proportional to
\begin{eqnarray}
\sum_{i=0}^{\lfloor \log_c M_0 \rfloor} \lceil c^i \rceil &\le \frac{c^{\lfloor \log_c M_0 \rfloor}+1}{c-1} +\lceil \log_c M_0 \rceil \nonumber\\
&\le \frac{c}{c-1} M_0 +\lceil \log_c M_0 \rceil.
\end{eqnarray}
Given $c$ is a constant we have that $(c-1)\in \Theta(1)$ and thus $\sum_{i=0}^{\lfloor \log_c M_0 \rfloor} c^i \in O(1/\sin(\theta_a))$ from~(4) in the main body.  If there exists an element that the algorithm makes a mistake on then $\theta_a \ge \sin^{-1}(\sqrt{1/N})\in \Omega(1/\sqrt{N})$ because the lowest probability of success corresponds to the case where there is only one training vector that is misclassified out of $N$.   From this we see that, if a misclassified vector exists, then the middle loop is repeated at least $\log_c(M_0)$ times which means that the final iteration taken corresponds to $m\ge M_0$ for the purposes of~(6) in the main body.  Therefore, under these assumptions, the probability that the middle loop updates the perceptron weights is at least $1/4$ from~\cite{BHMT02}.
 Given that a mistake exists to be found the middle loop outputs such an element with a probability of failure that is at most $3/4$ from~(6) in the main body. Furthermore, $O(M_0)=O(1/\sqrt{N})$ queries to $F_w$ are required by the inner loop.

The outer loop serves to amplify the success probability to at least $1-\epsilon$ from the (average) success probability for $m\ge M_0$, which is at least $1/4$, given that the perceptron makes a mistake on at least one training vector~\cite{BHMT02}.
Let us assume that we
repeat the middle loop of~Algorithm 1 $k$ times past this point and terminate searching for a marked state if the probability of failing to detect the element is at most $\delta$.  Since the probability of the middle loop failing to find such an element, given that it exists, is at most $3/4$ the probability of failing to find a marked state all $k$ times is at most $(3/4)^{k}$ which implies that it suffices to choose
\begin{equation}
k =  \lceil\log_{3/4} (\delta)\rceil.
\end{equation}
Given this error bound, the number of Grover iterations needed for the algorithm to find the marked element is $$O(\sqrt{N} \log_{3/4} \delta)\in O(\sqrt{N}\log(1/\delta)).$$  The result then follows by taking $\delta =\epsilon \gamma^2$.  Therefore the lemma holds if $\theta_a>0$.

If $\theta_a =0$ then the algorithm will never find a quantum state vector that the perceptron misclassifies and will successfully conclude that there is not a marked state after $O(\sqrt{N}\log(1/\epsilon\gamma^2))$ queries.  Therefore the lemma also holds in the trivial case.
\end{proofof}


\end{document}